\documentclass[10pt,journal]{IEEEtran}
\usepackage[letterpaper, top=0.5in, bottom=0.5in, left=0.70in, right=0.70in, columnsep= 0.25in]{geometry}
\IEEEoverridecommandlockouts

%
%

\usepackage[utf8]{inputenc}
\usepackage[T1]{fontenc}
\usepackage{url}
\usepackage{ifthen}
\usepackage{cite}
\usepackage{amsmath} 
\usepackage{physics}
\usepackage{array}
\usepackage{nicematrix}                          
\usepackage{amsthm}
\usepackage{amsfonts}
\usepackage{mathrsfs}
\usepackage{multirow}
\usepackage{citesort}
\usepackage{tikz}
\usetikzlibrary{arrows.meta}
\usepackage{pgfplots}
\pgfplotsset{compat=1.18}
\usepackage{bm}
\usepackage{algorithm}
\usepackage{algorithmic}      
\usepackage{graphicx}
\usepackage{epstopdf}
\usepackage{amssymb}
\usepackage{enumitem}
 \usepackage{CJKutf8}

\usepackage{diagbox}
\usepackage{makecell}
\usepackage{siunitx}
\usepackage{bigstrut}
\usepackage{tikz}
\usepackage{multirow}
\usepackage{booktabs}
\usepackage{stmaryrd}
\usepackage[caption=false]{subfig}
\usepackage{amsmath,amssymb,mathtools}



\usepackage{xhfill}
\allowdisplaybreaks[4]
\interdisplaylinepenalty2000 
\newcommand{\xfilll}[2][1ex]{%
	\dimen0=#2\advance\dimen0 by #1%
	\leaders\hrule height \dimen0 depth -#1\hfill%
}

\newtheoremstyle{thry}
{3pt}
{3pt}
{}
{1em}
{}
{:}
{.5em}
{}
\theoremstyle{thry}
\newtheorem{proposition}{\emph{\textbf{Proposition}}}
\newtheorem{theorem}{\emph{\textbf{Theorem}}}

\newtheorem{example}{\emph{Example}}
\IfFileExists{Figures/concatenate.tex}
{
  
}
{
  
}

\ifCLASSOPTIONonecolumn

\fi

\ifCLASSOPTIONtwocolumn

\fi




\begin{document}
\title{On the Weight Distribution of Concatenated Code Ensemble Based on the Plotkin Construction
} \author{
\IEEEauthorblockN{Xiao Ma,~\IEEEmembership{Member,~IEEE}
\thanks{Xiao Ma is with the School of Computer Science and Engineering, Sun Yat-sen University, Guangzhou 510006, China (e-mail: maxiao@mail.sysu.edu.cn). } 
}}

\maketitle
\pagestyle{empty}
\thispagestyle{empty}

\begin{abstract}
In this note, we reveal a relation between the weight distribution of a concatenated code ensemble
based on the Plotkin construction and those of its component codes.
The relation may find applications in the calculation of the ensemble weight distributions for many codes, including Reed–Muller (RM)-like codes.
\end{abstract}

\begin{IEEEkeywords}
Plotkin construction,  polar codes,  Reed–Muller~(RM) codes, weight distribution.
\end{IEEEkeywords}

\section{Introduction}
In 1960, to derive bounds on block codes, Plotkin proposed a method~\cite{Plotkin1960} that combines two codes of length $n$ into a code of length $2n$ as follows.
Let $\mathscr{C}_0$ and $\mathscr{C}_1$ be two binary block codes of length $n$.
Define $\mathscr{C} = \{ \bm{c} = (\bm{u} + \bm{v}, \bm{v}) \mid \bm{u} \in \mathscr{C}_0, \bm{v} \in \mathscr{C}_1 \}$. The following proposition is a well-known fact, see~\cite{LinCostello2004}.

\begin{proposition}
    Let  $\mathscr{C}_0 \in \mathbb{F}_2^n$ and $\mathscr{C}_1 \in \mathbb{F}_2^n$ be two binary block codes with the minimum Hamming distances $d_0$ and $d_1$, respectively.
  Then the concatenated code $\mathscr{C} = \{ \bm{c} = (\bm{u} + \bm{v}, \bm{v}) \mid \bm{u} \in \mathscr{C}_0, \bm{v} \in \mathscr{C}_1 \}$ based on the Plotkin construction has the minimum Hamming distance
  \[
    d=\min\{\,d_0,\; 2d_1\,\}.
  \]
\end{proposition}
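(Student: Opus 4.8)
The plan is to reduce the minimum-distance question to one about Hamming weights and then split into cases according to whether the second (the $\mathscr{C}_1$) component of a codeword vanishes. Writing $w(\cdot)$ for Hamming weight, and using that $\mathscr{C}_0$ and $\mathscr{C}_1$ (hence $\mathscr{C}$) are linear, the quantity $d$ equals the smallest $w(\bm{c})$ over nonzero $\bm{c} = (\bm{u} + \bm{v}, \bm{v}) \in \mathscr{C}$ with $\bm{u} \in \mathscr{C}_0$, $\bm{v} \in \mathscr{C}_1$, and $w(\bm{c}) = w(\bm{u} + \bm{v}) + w(\bm{v})$.

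For the lower bound $d \ge \min\{d_0,\, 2d_1\}$ I would consider three cases. If $\bm{v} = \bm{0}$ then $\bm{u} \neq \bm{0}$ and $w(\bm{c}) = w(\bm{u}) \ge d_0$. If $\bm{v} \neq \bm{0}$ but $\bm{u} = \bm{0}$ then $w(\bm{c}) = 2\,w(\bm{v}) \ge 2 d_1$. If $\bm{u} \neq \bm{0}$ and $\bm{v} \neq \bm{0}$, the triangle inequality for the Hamming metric gives $w(\bm{u} + \bm{v}) + w(\bm{v}) \ge w\big( (\bm{u} + \bm{v}) + \bm{v} \big) = w(\bm{u}) \ge d_0$, since over $\mathbb{F}_2$ the two copies of $\bm{v}$ cancel. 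In every case $w(\bm{c}) \ge \min\{d_0,\, 2d_1\}$.

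For the reverse inequality I would exhibit codewords attaining the two values: with $\bm{v} = \bm{0}$ and $\bm{u}$ a minimum-weight codeword of $\mathscr{C}_0$, the codeword $(\bm{u}, \bm{0})$ has weight $d_0$; with $\bm{u} = \bm{0}$ and $\bm{v}$ a minimum-weight codeword of $\mathscr{C}_1$, the codeword $(\bm{v}, \bm{v})$ has weight $2 d_1$. Combining the two bounds yields $d = \min\{d_0,\, 2d_1\}$.

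There is no serious obstacle here; the single spot that needs a moment's care is the third case of the lower bound, where bounding $w(\bm{u} + \bm{v})$ and $w(\bm{v})$ separately is useless (the first summand may be arbitrarily small) and one must instead recombine the two weights through the triangle inequality so that $\bm{v}$ drops out and a clean $\mathscr{C}_0$-bound survives. The same three-case argument goes through verbatim for possibly nonlinear component codes after replacing ``nonzero codeword'' with ``difference of two distinct codewords'' and using that such a difference of $\mathscr{C}_i$-codewords has weight at least $d_i$ by the definition of minimum distance.
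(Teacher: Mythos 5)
Your proof is correct, and since the paper omits its own proof of this proposition (deferring to the standard reference), there is nothing to contrast: your three-case weight analysis, with the Hamming triangle inequality $w(\bm{u}+\bm{v})+w(\bm{v})\ge w(\bm{u})$ handling the case where both components are nonzero, together with the explicit codewords $(\bm{u},\bm{0})$ and $(\bm{v},\bm{v})$ for the matching upper bound, is exactly the classical textbook argument the paper has in mind. Your closing remark correctly extends it to nonlinear component codes by working with differences of distinct codewords (for the upper bound one then uses pairs $(\bm{u}+\bm{v},\bm{v}),(\bm{u}'+\bm{v},\bm{v})$ and $(\bm{u}+\bm{v},\bm{v}),(\bm{u}+\bm{v}',\bm{v}')$ rather than codewords containing $\bm{0}$).
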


\begin{proof}
It is omitted here.
\end{proof}

An immediate question arises: For binary linear block codes, can we calculate the weight distribution of $\mathscr{C}$ from those of $\mathscr{C}_0$ and $\mathscr{C}_1$?
This is generally not solvable since knowing the weight distributions of the component codes alone is not sufficient to determine the concatenated code, and hence its weight distribution. However, it is feasible to calculate the ensemble weight distributions if we properly define a code ensemble from $\mathscr{C}_0$ and $\mathscr{C}_1$.
This note
serves to build a recursion from the weight distributions of $\mathscr{C}_0$ and
$\mathscr{C}_1$ to that of $\mathscr{C} = \{ \bm{c} = (\bm{u} + \bm{v}\Pi, \bm{v}) \mid \bm{u} \in \mathscr{C}_0, \bm{v} \in \mathscr{C}_1 \}$, where $\Pi$ is a uniformly distributed random permutation matrix of order $n$.
This recursion may find applications in the calculation of the weight distributions for some randomly constructed codes and performance evaluations of Reed–Muller (RM) codes~\cite{Muller1954,Reed1954} and polar codes~\cite{Arikan2009}, to some extent.




\section{The Main Result}

\begin{theorem}
  \label{Thm:main}
  Let $\mathscr{C}_0$ and $\mathscr{C}_1$ be two binary linear codes of length $n$
  with weight distributions $A^{(0)}(X)$ and $A^{(1)}(X)$, respectively, where
  \[
    A^{(i)}(X) = \sum_{j=0}^{n} A^{(i)}_j X^j, \quad i=0,1,
  \]
  and $A^{(i)}_j$ is the number of codewords of weight $j$ in $\mathscr{C}_i$.
  Then the weight distribution of
  \[
    \mathscr{C} = \{\,  (\bm{u} + \bm{v}\Pi, \bm{v}) \;\mid\; \bm{u}\in\mathscr{C}_0,\ \bm{v}\in\mathscr{C}_1 \,\}
  \]
  with a uniformly distributed random permutation matrix $\Pi$ is given by
  \[
    A(X) = \sum_{w} A_w X^w,
  \]
where
\begin{align}\label{Eq:theorem}
A_w &= \sum_{w_1=\max\{0,w-n\}}^{\min\{w,n\}} \sum_{i=\max\{0,\,w-n\}}^{\min\{w_1,\,w-w_1\}}
a(w_1,i)  A^{(1)}_{w_1} A^{(0)}_{\,w-2i} ,
\end{align}
and
\begin{equation}
  a(w_1,i) = \frac{\binom{n}{\,w-w_1\,}\binom{\,w-w_1\,}{\,i\,}\binom{\,n-w+w_1\,}{\,w_1-i\,}}{\binom{n}{\,w_1\,}\binom{n}{\,w-2i\,}}.
\end{equation}

Before proving the theorem, we verify that the calculation of ${A}_w$ for $0 \le w < d = \min\{d_0, 2d_1\}$ is consistent with Proposition 1.
Obviously, $A_0 = 1$, since $i = w_1 = 0$ and $A_0^{(0)} = A_0^{(1)} = 1$.
Assume that $1 \leq w < d$.
Since
\[
w - 2i \leq w < d_0,
\]
only the case $w - 2i = 0$ contributes to the calculation.
That is, $i = \frac{w}{2}$.
If so, we have
\[
\frac{w}{2} \leq \min\{w_1, w - w_1\},
\]
and $1 \leqslant w_1 \leqslant \frac{w}{2} < d_1$, which implies $A_{w_1}^{(1)} = 0$.
This in turn leads to $A_w = 0$.

\begin{proof}
Let $w \leq 2n$ be a nonnegative integer. We define the Hamming sphere
\[
S_w^{(2n)} = \{ \bm{c} \in \mathbb{F}_2^{2n} : w_H(\bm{c}) = w \},
\]
where $w_H(\bm{c})$ denotes the Hamming weight of $\bm{c}$.
To prove the theorem, we introduce a random vector $\bm{C}$, which is uniformly distributed over $S_w^{(2n)}$, i.e.,
\[
\Pr\{\bm{C} = \bm{c}\} =
\begin{cases}
\displaystyle \frac{1}{\binom{2n}{w}}, & \text{if } w_H(\bm{c}) = w, \\[6pt]
0, & \text{otherwise}.
\end{cases}
\]
We further write $\bm{C} = (\bm{C}^{(0)}, \bm{C}^{(1)})$, where $\bm{C}^{(i)} \in \mathbb{F}_2^n$, $i=0,1$.
Notice that in the construction of $\mathscr{C}$, the interleaver $\Pi$ is chosen uniformly at random.
Therefore, given $w_H(\bm{C}^{(1)}) = w_1$ and $w_H(\bm{C}^{(0)}+\bm{C}^{(1)}\Pi) = w_0$,
$\bm{C}^{(1)}$ is uniformly distributed over $S_{w_1}^{(n)}$ and $\bm{C}^{(0)}+\bm{C}^{(1)}\Pi$ is uniformly distributed over $S_{w_0}^{(n)}$.
This is guaranteed by the symmetry induced by the random permutation $\Pi$.

Obviously, the probability that $\bm{C}$ is a codeword of $\mathscr{C}$ is
\begin{align}\label{Eq:prob1_perm}
\Pr\{\bm{C}\in\mathscr{C}\} = \frac{A_w}{\binom{2n}{w}}.
\end{align}
On the other hand, $\bm{C}\in\mathscr{C}$ if and only if $\bm{C}^{(1)} \in \mathscr{C}_1$ and $\bm{C}^{(0)}+\bm{C}^{(1)}\Pi \in \mathscr{C}_0$.
Since $\Pi$ is uniformly distributed, this event occurs with probability
\begin{align}\label{Eq:prob2_perm}
\sum_{w_1=\max\{0,w-n\}}^{\min\{w,n\}}
   & \frac{\binom{n}{w_1}\binom{n}{w-w_1}}{\binom{2n}{w}}
   \cdot  \frac{A_{w_1}^{(1)}}{\binom{n}{w_1}} \notag \\
& \cdot \sum_{i=\max\{0,\,w-n\}}^{\min\{w_1,\,w-w_1\}}
   \frac{\binom{w-w_1}{i}\binom{n-w+w_1}{\,w_1-i\,}}{\binom{n}{\,w_1\,}}
   \, \frac{ A^{(0)}_{\,w-2i}}{\binom{n}{\,w-2i\,}}.
\end{align}

By equating \eqref{Eq:prob1_perm} and \eqref{Eq:prob2_perm}, we obtain the expression in \eqref{Eq:theorem}, which completes the proof.
\end{proof}

Notice that Theorem~\ref{Thm:main} provides a method to compute the ensemble weight distribution of $\mathscr{C}$ from those of $\mathscr{C}_0$ and $\mathscr{C}_1$, averaged over uniformly distributed interleavers $\Pi$. This recursion can be implemented alternatively in the manner shown below~\cite{Ma2015BMST}.
\begin{proof}[An equivalent recursion with proof]
Assume $\bm{u}$ and $\bm{v}$ are with weight $i$ and $j$ respectively.
Then the weight of codeword $(\bm{u} + \bm{v} \Pi,\bm{v})$ depends on the overlap between $\bm{u}$ and $\bm{v} \Pi$, which is denoted by $t$.
This occurs with probability
\[
  \frac{\binom{i}{t} \binom{n - i}{j - t}}{\binom{n}{j}}
\] 
and produces a codeword $(\bm{u} + \bm{v} \Pi,\bm{v})$ of weight $w = i + 2j - 2t$.
Therefore, the weight distribution of $\mathscr{C}$ can be expressed as
\[
  A_w = \sum_{i=0}^{n} \sum_{j=0}^{n} \sum_{t = 0}^{n} \mathbb{I}[i + 2j - 2t = w] \frac{\binom{i}{t} \binom{n - i}{j - t}}{\binom{n}{j}} A^{(0)}_{i} A^{(1)}_{j},
\]
where $\mathbb{I}[\cdot]$ is the indicator function.
Since $t$ is uniquely determined by $i,j,w$, we can eliminate the summation over $t$ and get
\[
  A_w = \sum_{\substack{i=0\\ (w - i) \text{ even}}}^{n} \sum_{j=0}^{n} \frac{\binom{i}{j - (w - i)/2} \binom{n - i}{(w - i)/2}}{\binom{n}{j}} A^{(0)}_{i} A^{(1)}_{j}.
\]
It can be verified that this recursion is equivalent to that given in Theorem 1.
\end{proof}

\end{theorem}

Notice that it is generally intractable to calculate the weight distribution of a specific $\mathscr{C}$ from those of $\mathscr{C}_0$ and $\mathscr{C}_1$. However, by introducing the random permutations, the ensemble weight distribution can be calculated by algorithms, say, presented in~\cite{ma2017systematic,chiu2020interleaved,yao2024}.
Compared with the existing algorithms over the polynomial rings, Theorem~1 provides a closed-form formula for the weight distribution. It becomes more convenient if only partial weight distribution is required, say, for truncated union bounds~\cite{ma2013new}.
In the following, we present two toy examples, where the component codes are invariant under all permutations. Hence the resulting weight distributions are also those of the specific codes.

\begin{example}
Let $\mathscr{C}_0 [3,1,3]$  and $\mathscr{C}_1 [3,2,2]$ be the repetition code and the single parity-check code of length $3$, respectively.
We have $A^{(0)}(x) = 1 + x^3$  and  $A^{(1)}(x) = 1 + 3x^2$.
Then, $A(x)$ of $\mathscr{C}$ is calculated as follows.

$A_0 = 1$, $A_1 = A_2 = 0$,
\begin{align*}
A_3 &= \sum_{w_1=\max\{0,3-3\}}^{\min\{3,3\}} \sum_{i=\max\{0,3-3\}}^{\min\{w_1,\,3-w_1\}}
      a(w_1,i)\,A^{(1)}_{w_1}\,A^{(0)}_{3-2i} \\
    &= a(0,0)A^{(1)}_0A^{(0)}_3 + a(2,0)A^{(1)}_2A^{(0)}_3 + a(2,1)A^{(1)}_2A^{(0)}_1 \\
    &= 1\cdot 1\cdot 1 + 1\cdot 3\cdot 1 + \tfrac{6}{9}\cdot 3\cdot 0 = 4,\\[6pt]
A_4 &= \sum_{w_1=\max\{0,4-3\}}^{\min\{4,3\}} \sum_{i=\max\{0,4-3\}}^{\min\{w_1,\,4-w_1\}}
      a(w_1,i)\,A^{(1)}_{w_1}\,A^{(0)}_{4-2i} \\
    &= a(2,1)A^{(1)}_2A^{(0)}_2 + a(2,2)A^{(1)}_2A^{(0)}_0 \\
    &= \tfrac{6}{9}\cdot 3\cdot 0 + 1\cdot 3\cdot 1 = 3,\\[6pt]
A_5 &= \sum_{w_1=\max\{0,5-3\}}^{\min\{5,3\}} \sum_{i=\max\{0,5-3\}}^{\min\{w_1,\,5-w_1\}}
      a(w_1,i)\,A^{(1)}_{w_1}\,A^{(0)}_{5-2i} \\
    &= a(2,2)A^{(1)}_2A^{(0)}_1 = 1\cdot 3\cdot 0 = 0,\\[6pt]
\mathrm{and}~ & \\
A_6 &= \sum_{w_1=\max\{0,6-3\}}^{\min\{6,3\}} \sum_{i=\max\{0,6-3\}}^{\min\{w_1,\,6-w_1\}}
      a(w_1,i)\,A^{(1)}_{w_1}\,A^{(0)}_{6-2i} \\
    &= a(3,3)A^{(1)}_3A^{(0)}_0 = 1\cdot 0\cdot 1 = 0.
\end{align*}

In summary,
$
A(x) = 1 + 4x^{3} + 3x^{4},
$
which can be verified by enumerating all the eight codewords in $\mathscr{C}$.
\end{example}

\begin{example}
Consider the $\mathrm{RM}(1,3)$ code, i.e., the $\mathscr{C}[8,4,4]$ code, whose weight distribution is
$
A(x) = 1 + 14x^{4} + x^{8}.
$
We now demonstrate how to compute $A(x)$ recursively. The encoding of $\mathrm{RM}(1,3)$ can be represented by a binary tree, as shown in Fig.~\ref{Fig:Polartree}.
Each node in the tree is associated with a vector. The active bits can take values $0$ or $1$, while the frozen bits are fixed to $0$.
The vector associated with a node can be calculated by the Plotkin construction as
\[
\bm{v} = (\bm{v}_0 + \bm{v}_1 \Pi, \bm{v}_1),
\]
where $\bm{v}_0$ and $\bm{v}_1$ are vectors associated with the left child and the right child of the node, respectively.
The codeword is thus the vector associated with the root node. This corresponds exactly to a multilevel Plotkin construction, which can be employed to recursively calculate the weight distribution at the root node. This procedure is illustrated in Fig.~\ref{Fig:Polar}, where the weight distributions at the leaves are initialized as $1$ for a frozen leaf and as $1+X$ for an active leaf.

\begin{figure}[t]
  \centering
  \includegraphics[width=0.45\textwidth]{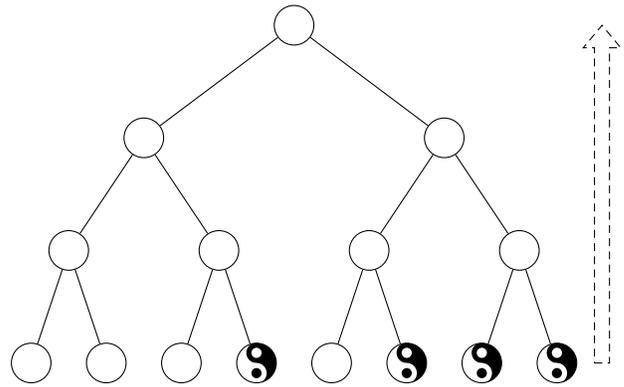}\\
  \caption{The tree structure of the RM code $\mathscr{C}[8,4,4]$. Hollow circles at the leaves denote frozen positions (fixed to $0$), while the Taiji (yin–yang) symbols denote active (information) positions (which can be $0$ or $1$). Encoding proceeds from the leaves to the root.}\label{Fig:Polartree}
\end{figure}

\begin{figure}[t]
  \centering
  \includegraphics[width=0.45\textwidth]{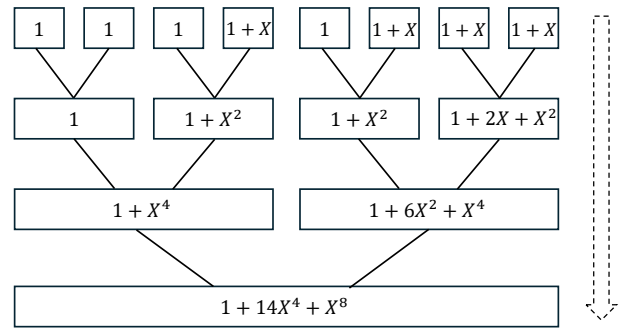}\\
  \caption{An illustration of the recursive calculation of the weight distribution of the RM code $\mathscr{C}[8,4,4]$.
  The weight distributions at the leaves are initialized as $1$ for a frozen leaf and to $1+X$ for an active leaf, and the calculation proceeds from top to bottom.}\label{Fig:Polar}
\end{figure}

\end{example}

\section{Conclusion}
In this note, we have essentially developed an algorithm~(with a computational complexity of polynomial order) to calculate the weight distribution of a concatenated code ensemble based on the Plotkin construction from those of its component codes.

\section*{Acknowledgment}
The author would like to thank Dr.~Qianfan Wang and Dr.~Jifan Liang for their helpful discussions. 
The author also thanks Dr.~Qianfan Wang for his assistance in preparing this note.
The alternative recursion for Theorem~\ref{Thm:main} is provided and typed by Dr.~Jifan Liang and Dr.~Xinyuanmeng Yao.
\bibliographystyle{IEEEtran}
\bibliography{bibliofile}
\end{document}